\newtheorem{theorem}{Theorem}
\newtheorem{proposition}[theorem]{Proposition}
\theoremstyle{definition}
\newtheorem{definition}[theorem]{Definition}
\newcommand{\prob}{\mathbb{P}}
\renewcommand{\epsilon}{\varepsilon}
\renewcommand{\leq}{\leqslant}
\renewcommand{\geq}{\geqslant}
\newcommand{\fGr}[1][r]{{f(G;{#1})}}
\def\numleaves{\ell}
\def\ressize{m} 
\def\ampfactor{k}            
\def\superstar{S_{\numleaves,\ressize}^\ampfactor}      
\def\onevarsuperstar{S_{\numleaves,\ressize(\numleaves)}^\ampfactor}      
\def\onevarfivesuperstar{S_{\numleaves,\ressize(\numleaves)}^5}      
\def\samefivesuperstar{S_{\numleaves,\numleaves}^5}      
\def\numvertices{n}
\def\fixprob{f}
\def\eventR{R}
\def\eventF{F}
\def\probNoR{p(\numleaves,r)}
\def\probF{q(\numleaves,r)}
\title {On the fixation probability of superstars}
\author{Josep D{\'i}az%
        \thanks{\protect\raggedright
                Departament de Llenguatges i Sistemes Inform{\'a}tics,
                Universitat Polit{\'e}cnica de Catalunya, Spain. 
                Email: \{\texttt{diaz},
                            \texttt{mjserna}\}\texttt{@lsi.upc.edu}\,.},
        Leslie Ann Goldberg%
        \thanks{\protect\raggedright
                Department of Computer Science, University of Liverpool,
                UK.  Email: \{\texttt{L.A.Goldberg},
                \texttt{David.Richerby}\}\texttt{@liverpool.ac.uk}.
                Supported by EPSRC grant EP/I011528/1
                    \emph{Computational Counting}.},
        George B.~Mertzios%
        \thanks{\protect\raggedright
                School of Engineering and Computing Sciences,
                Durham University, UK.
                Email: \texttt{george.mertzios@durham.ac.uk}\,.},\\
        David Richerby\footnotemark[2], 
        Maria Serna\footnotemark[1]\ \ and 
        Paul G.~Spirakis%
        \thanks{\protect\raggedright
                Department of Computer Engineering and Informatics,
                University of Patras, Greece.
                Email: \texttt{spirakis@cti.gr}\,.}}
\date{\vspace{-0.8cm}}
\begin{document}
\maketitle{}

\begin{abstract}
    \noindent 
The Moran process models the spread of genetic mutations through a
population.  A mutant with relative fitness~$r$ is introduced into
a population and the system evolves, either reaching fixation (in
which every individual is a mutant) or extinction (in which none
is).  In a widely cited paper (\emph{Nature}, 2005), Lieberman,
Hauert and Nowak generalize the model to populations on the
vertices of graphs.  They describe a class of
graphs (called ``superstars''),  with a parameter~$k$.  Superstars
are designed
to have an increasing fixation probability as $k$ increases.
They state that   the probability of
fixation tends to $1-r^{-k}$ as graphs get larger but
we show that this claim
is untrue as stated. 
Specifically, for $k=5$, we show that the true fixation
probability (in the limit, as graphs get larger)
is at most $1-1/j(r)$ where $j(r)=\Theta(r^4)$,
contrary to the claimed result.  
We do believe that the qualitative claim of Lieberman et al.\ 
--- that the fixation probability of superstars tends to~$1$ as $k$~increases
--- is correct, and that it can probably be proved along the lines of their sketch.
We were able to run larger computer simulations than the ones
presented in their paper. However, simulations on graphs of around $40,000$
vertices do not support their claim. Perhaps these graphs are too small to exhibit
the limiting behaviour.
\end{abstract}

\noindent
{\bf Key words:} evolutionary dynamics, Moran process, fixation
    probability
 
\section{Introduction}
\label{sec:Intro}

The Moran process~\cite{Mor1958:Process} is a simple, discrete-time
model of the spread of genetic mutations through a finite population.
Individuals that do not possess the mutation have ``fitness''~1 and
mutants have fitness $r>0$.  At each time step, an individual is
selected, with probability proportional to its fitness, to reproduce.
A second individual is chosen uniformly at random, without
regard to fitness, and is replaced with a copy of
the reproducer.  Since the reproducer is chosen with probability
proportional to its fitness, the case $r>1$ corresponds to an
advantageous mutation.  With probability~1, the population will reach
one of two states, after which no further change is possible: the
population will consist entirely of mutants or of non-mutants.  These
scenarios are referred to as \emph{fixation} and \emph{extinction},
respectively.

Lieberman, Hauert and Nowak extend the model by structuring the
population on the vertices of a fixed directed
graph~\cite{LHN2005:Superstars}.  Each vertex corresponds to exactly
one individual.  In each time step of this generalized Moran process,
the reproducer is chosen as before: an individual is selected, with
probability proportional to its fitness.  Then a second
individual is selected uniformly at random from the set of
out-neighbours of the reproducer.  Once again, the second
individual is replaced with a copy of the reproducer.  The original
Moran process corresponds to the special case of the extended process
in which the graph is a complete graph (one with edges between all
pairs of individuals).

In this paper, we study the model of Lieberman, Hauert and Nowak.  It
is referred to as an \emph{invasion process} because an individual
duplicates and then replaces another.  This is in contrast to the
\emph{voter model}, which is another generalization of the Moran
process in which individuals first die, and are then replaced.  There
is much work on voter-model variants of the Moran process: see for
example~\cite{Mar1974:GeneFreq}.  In general, voter models and
invasion process behave differently~\cite{ARS2006:EvDyn}.
 
Given a graph $G$, we can ask what is the probability that a mutant
with fitness $r$ reaches fixation in the invasion process and we
denote this probability by $\fGr$.  It is easy to see that the number
of mutants in the original Moran process behaves as a random walk on
the integers with bias $r$ to the right and with absorbing barriers at
0 and~$N$, where $N$ is the population size.  Hence, as $N\to\infty$,
the fixation probability tends to $1-\frac{1}{r}$.  The generalized
Moran process can have a higher fixation probability.  For example, on
the complete bipartite graph $K_{1,N-1}$, the fixation probability
tends to $1-\frac{1}{r^2}$ as $N$ tends to infinity (see, for example,
Broom and Rycht\'a\v r's calculation \cite{Broom} of the exact
fixation probability, as a function of~$r$ and~$N$).
 
\subsection{Families of graphs with high fixation probability}

Lieberman \emph{et al.}\@ \cite{LHN2005:Superstars} introduce three
classes of graphs, which they call funnels, metafunnels and
superstars.  Superstars will be defined formally in
Section~\ref{sec:Bound}. An example is given in Figure~\ref{fig:superstar}.  
Funnels, metafunnels and superstars 
are essentially layered graphs, with
the addition of ``positive feedback loops''$\!$, and they have a
parameter~$k$ that corresponds to the number of layers.  
Lieberman, Hauert and Nowak claim
that, for fixed~$r>1$, for sufficiently large graphs in these classes,
the fixation probability tends to $1-r^{-k}$.  This is stated as
\cite[Theorem~3]{LHN2005:Superstars} for superstars and a  
proof sketch is given. 
Hauert states \cite[Equation
  (5)]{Hauert} that the same limiting fixation probability (and
presumably the same argument) also applies to funnels.  Lieberman \emph{et
al.}\@ conclude \cite{LHN2005:Superstars} that funnels, metafunnels
and superstars ``have the
amazing property that, for large $N$ [the number of vertices in the
  graph], the fixation probability of any advantageous mutant
converges to one. [...]  Hence, these population structures guarantee
fixation of advantageous mutants, however small their selective
advantage.''

The claimed limiting fixation probability of $1-r^{-k}$ is cited
frequently in the literature (see, for example, \cite[Equation
  (2)]{Barbosa1}, \cite[Equation (4)]{Barbosa2}, the survey paper
\cite[Equation (6)]{Shakarian2011} and the references therein).  We
prove that this limiting fixation probability is incorrect for $k=5$,
demonstrating that the 
proof sketch cannot be made rigorous, at least for the exact claim that they make.

On the other hand, superstars do seem to be well-designed to amplify selection.
Informally, the chains in these graphs (such as the chain $c_{1,1}, c_{1,2}, c_{1,3}$
in Figure~\ref{fig:superstar}) seem to be a good mechanism for amplifying
the fitness of a mutant, and the trade-off between the high out-degree of the centre
vertex and the lower in-degree seems to be a useful feature.

We have investigated the fixation probability of superstars via computer simulation.
Before discussing our  proof, and the result of these
simulations, we give a brief survey of
the relevant literature.  Lieberman \emph{et al.}\@ \cite{LHN2005:Superstars}
simulated the fixation probability of superstars for the special cases
when $r=1.1$ and $k=3$ and $k=4$ on graphs of around 10,000 vertices.
Unfortunately, these particular values are too small to give
evidence of their general claim.

Funnels and metafunnels are not very amenable to simulation since the
number of vertices is exponential in the relevant parameters.  We are
not aware of any published justification for the claim for
metafunnels but there has been some simulation work relevant to
funnels.  Barbosa \emph{et al.}\@ \cite{Barbosa1} have found the
fixation probability to be close to $1-r^{-3}$ for funnels of up to
around 1,600 vertices for the special cases $k=3$ and $r=1.1$ and
$r=2$.  Motivated by the claimed fixation probability for funnels,
their objective was to see whether similar phenomena occur for similar
randomly generated layered graphs, which they argue are more like
``naturally occurring population structures'' than are funnels,
metafunnels and superstars.  They found that the fixation
probabilities for $r=1.1$ and $r=2$ on these randomly generated graphs
with $k=5$ or
$k=10$ generally exceed the value of $1-\tfrac{1}{r}$ that would be
seen in an unstructured population but are substantially lower than
$1-r^{-k}$.  These experiments do not apply directly to funnels (and
it may be that the graphs that they considered were too small to
demonstrate the limit behaviour) but, in any case, their experiments
do not give evidence in favour of the fixation probability claimed by
Lieberman \emph{et al.}\@ \cite{LHN2005:Superstars}.

For small graphs it is possible to calculate exact fixation
probabilities by solving a linear system. If the graph has~$n$
vertices, then the Moran process has $2^n$~states, so there are
$2^n$~equations in the linear system.  Computationally, solving such a
system is not feasible, apart from for tiny graphs.  A significant
improvement was introduced by Houchmandzadeh and
Vallade~\cite{HV2011:FixProb}, who present a new method for
calculating fixation probabilities by solving differential equations.
The relevant equation \cite[Equation 23]{HV2011:FixProb} has a
variable~$z_i$ for each vertex~$i$, so there are~$n$ variables in all.
A further improvement is given: if the vertices in the graph can be
partitioned into equivalence classes such that all of the vertices in
a given equivalence class have exactly the same set of in-neighbours
and the same set of out-neighbours then these vertices can share a
variable (in the terminology of~\cite{HV2011:FixProb}, they can be
viewed as a single ``island''). Thus, the fixation probability can be
calculated by solving a differential equation in which the number of
variables equals the number of equivalence classes. The paper
\cite{HV2011:FixProb} also offers a method for approximately solving
the relevant differential equations. This seems to work well in
practice, though the approximation is difficult to analyse and there
are currently no known results guaranteeing how close the approximate
value will be to the actual fixation probability.

\subsection{Outline of the paper}

In  Section~\ref{sec:Bound}, we prove that the fixation probability for sufficiently
large parameter-5 superstars cannot exceed $1-\frac{r+1}{2r^5+r+1}$,
which is clearly bounded below $1-r^{-5}$ for all  sufficiently large $r$
(in particular, for $r\geq 1.42$).  This proof is fully rigorous, though
we use a computer algebra package to invert a $31\times 31$ symbolic
matrix.  Thus, we show that \cite[Theorem 3]{LHN2005:Superstars} is incorrect
as stated (though something very similar may well be true).

Section~\ref{sec:Simulation} presents simulation results 
on graphs of around $40,000$ vertices.
These simulations do not support the claim that the fixation probability 
is $1-r^{-k}$, or that this probability increases as $k$ increases.
However, it may be that $40,000$ vertices is not enough to exhibit the true limiting behaviour.

\section{An upper bound for $k=5$}
\label{sec:Bound}

The superstars of Lieberman \emph{et al.}\@ are defined as follows.
A superstar $\superstar$ has a \emph{centre}
vertex~$v$, and $\numleaves$ disjoint subgraphs called \emph{leaves}.
Each leaf consists of a \emph{reservoir} of~$\ressize$ vertices, together
with a chain of length $\ampfactor-2$.  There are edges from the
centre to the reservoir vertices, from the reservoir vertices to
the start of the chain, and from the end of the chain back to the
centre.  The formal definition follows, where $[n]$ denotes the set
$\{1, \dots, n\}$.

\begin{figure}
    \begin{center}
  \begin{picture}(0,0)%
\includegraphics{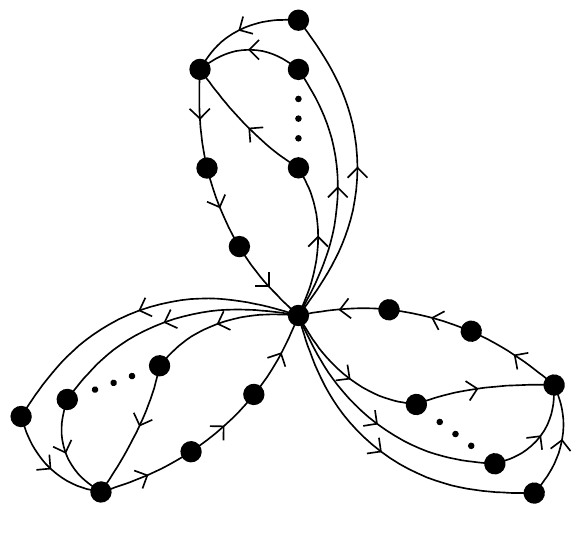}%
\end{picture}%
\setlength{\unitlength}{4144sp}%
\begingroup\makeatletter\ifx\SetFigFont\undefined%
\gdef\SetFigFont#1#2#3#4#5{%
  \reset@font\fontsize{#1}{#2pt}%
  \fontfamily{#3}\fontseries{#4}\fontshape{#5}%
  \selectfont}%
\fi\endgroup%
\begin{picture}(2640,2490)(3136,-4711)
\put(3601,-4651){\makebox(0,0)[b]{\smash{{\SetFigFont{10}{12.0}{\familydefault}{\mddefault}{\updefault}{\color[rgb]{0,0,0}$c_{3,1}$}%
}}}}
\put(3961,-4471){\makebox(0,0)[lb]{\smash{{\SetFigFont{10}{12.0}{\familydefault}{\mddefault}{\updefault}{\color[rgb]{0,0,0}$c_{3,2}$}%
}}}}
\put(4276,-4201){\makebox(0,0)[lb]{\smash{{\SetFigFont{10}{12.0}{\familydefault}{\mddefault}{\updefault}{\color[rgb]{0,0,0}$c_{3,3}$}%
}}}}
\put(3466,-4201){\makebox(0,0)[lb]{\smash{{\SetFigFont{10}{12.0}{\familydefault}{\mddefault}{\updefault}{\color[rgb]{0,0,0}$x_{3,2}$}%
}}}}
\put(3151,-4156){\makebox(0,0)[rb]{\smash{{\SetFigFont{10}{12.0}{\familydefault}{\mddefault}{\updefault}{\color[rgb]{0,0,0}$x_{3,1}$}%
}}}}
\put(3961,-3886){\makebox(0,0)[lb]{\smash{{\SetFigFont{10}{12.0}{\familydefault}{\mddefault}{\updefault}{\color[rgb]{0,0,0}$x_{3,m}$}%
}}}}
\put(4411,-3796){\makebox(0,0)[rb]{\smash{{\SetFigFont{10}{12.0}{\familydefault}{\mddefault}{\updefault}{\color[rgb]{0,0,0}$v$}%
}}}}
\put(4141,-3391){\makebox(0,0)[rb]{\smash{{\SetFigFont{10}{12.0}{\familydefault}{\mddefault}{\updefault}{\color[rgb]{0,0,0}$c_{1,3}$}%
}}}}
\put(4006,-3031){\makebox(0,0)[rb]{\smash{{\SetFigFont{10}{12.0}{\familydefault}{\mddefault}{\updefault}{\color[rgb]{0,0,0}$c_{1,2}$}%
}}}}
\put(3961,-2581){\makebox(0,0)[rb]{\smash{{\SetFigFont{10}{12.0}{\familydefault}{\mddefault}{\updefault}{\color[rgb]{0,0,0}$c_{1,1}$}%
}}}}
\put(4591,-2356){\makebox(0,0)[lb]{\smash{{\SetFigFont{10}{12.0}{\familydefault}{\mddefault}{\updefault}{\color[rgb]{0,0,0}$x_{1,1}$}%
}}}}
\put(4681,-2581){\makebox(0,0)[lb]{\smash{{\SetFigFont{10}{12.0}{\familydefault}{\mddefault}{\updefault}{\color[rgb]{0,0,0}$x_{1,2}$}%
}}}}
\put(4861,-3031){\makebox(0,0)[lb]{\smash{{\SetFigFont{10}{12.0}{\familydefault}{\mddefault}{\updefault}{\color[rgb]{0,0,0}$x_{1,m}$}%
}}}}
\put(4951,-3526){\makebox(0,0)[b]{\smash{{\SetFigFont{10}{12.0}{\familydefault}{\mddefault}{\updefault}{\color[rgb]{0,0,0}$c_{2,3}$}%
}}}}
\put(5266,-3616){\makebox(0,0)[lb]{\smash{{\SetFigFont{10}{12.0}{\familydefault}{\mddefault}{\updefault}{\color[rgb]{0,0,0}$c_{2,2}$}%
}}}}
\put(5041,-3931){\makebox(0,0)[b]{\smash{{\SetFigFont{10}{12.0}{\familydefault}{\mddefault}{\updefault}{\color[rgb]{0,0,0}$x_{2,m}$}%
}}}}
\put(5761,-4021){\makebox(0,0)[lb]{\smash{{\SetFigFont{10}{12.0}{\familydefault}{\mddefault}{\updefault}{\color[rgb]{0,0,0}$c_{2,1}$}%
}}}}
\put(5401,-4201){\makebox(0,0)[b]{\smash{{\SetFigFont{10}{12.0}{\familydefault}{\mddefault}{\updefault}{\color[rgb]{0,0,0}$x_{2,2}$}%
}}}}
\put(5671,-4516){\makebox(0,0)[lb]{\smash{{\SetFigFont{10}{12.0}{\familydefault}{\mddefault}{\updefault}{\color[rgb]{0,0,0}$x_{2,1}$}%
}}}}
\end{picture}%
    \end{center}
    \vspace{-5mm}
    \caption{The superstar $S^5_{3,m}$.  The three leaves each have
      $m$ reservoir vertices and a chain of length $5-2=3$ between
      them and the centre, $v$.}
    \label{fig:superstar}
\end{figure}
  
\begin{definition}[Lieberman \emph{et al.}~\cite{LHN2005:Superstars}]
    Define
    \begin{gather*}
        V = \{v\} \cup \{ x_{i,j} \mid  i \in[\numleaves],j\in[\ressize]\} \cup
            \{c_{i, j} \mid i \in [\numleaves], j \in [\ampfactor-2]\}, \mbox{and}\\
\begin{aligned}
E = \{(v,x_{i,j}),(x_{i,j},c_{i,1}) \mid i\in[\numleaves], j \in[\ressize] \} &\cup 
\{(c_{i,j},c_{i,j+1}) \mid i\in[\numleaves], j\in[\ampfactor-3]\} 
\\ &\cup 
\{(c_{i,\ampfactor-2},v) \mid i\in[\numleaves]\}.
    \end{aligned}
    \end{gather*} 
    The graph $\superstar=(V,E)$ is a \emph{parameter-$k$ superstar}
    with $\numleaves$ leaves and reservoir size~$\ressize$.  We
    use~$\numvertices$ to denote $|V| = 1+ \numleaves (\ressize +
    \ampfactor-2)$.
\end{definition}

Figure~\ref{fig:superstar} shows the parameter-5 superstar
$S^5_{3,m}$.  The parameter $k$ is sometimes referred to as the
``amplification factor''.

Lieberman \emph{et al.}\@ state the following proposition (which turns
out to be incorrect --- see Theorem~\ref{thm:kfive} below).
 
\begin{proposition}[{Stated as \cite[Theorem 3]{LHN2005:Superstars}}]
\label{thm:original}
    \begin{equation*}
      \lim_{\numleaves,\ressize\rightarrow\infty} \fixprob({\superstar};r)
          = \frac{1-r^{-\ampfactor}}{1-r^{-\ampfactor \numvertices}}.
    \end{equation*}
\end{proposition}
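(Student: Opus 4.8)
The plan is to show that, in the limit $\numleaves,\ressize\to\infty$, the Moran process on $\superstar$ reduces to the ordinary, unstructured Moran process on $\numvertices$ individuals in which the mutant has \emph{effective} fitness $r^\ampfactor$ rather than~$r$. The target expression is precisely the classical fixation probability of a single mutant of fitness~$\rho$ in an unstructured population of size~$\numvertices$, namely $(1-\rho^{-1})/(1-\rho^{-\numvertices})$, evaluated at $\rho=r^\ampfactor$. So it suffices to exhibit a one-dimensional description of the macroscopic state that behaves as a biased random walk on $\{0,1,\dots,\numvertices\}$ with absorbing barriers at the ends and forward-to-backward rate ratio~$r^\ampfactor$, and then to invoke the standard gambler's-ruin computation, which for a walk started from a single mutant yields exactly $(1-r^{-\ampfactor})/(1-r^{-\ampfactor\numvertices})$.

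First I would introduce a slow macroscopic variable and argue a separation of time scales. The reservoir vertices $x_{i,j}$ are numerous and each is updated only very rarely (the centre is selected to reproduce with probability $\bigO(1/\numvertices)$ per step and then targets one specific reservoir vertex with probability $1/(\numleaves\ressize)$), whereas the centre~$v$ and the short chains $c_{i,\cdot}$ carry the ``decision'' about whether a mutation becomes committed. I would condition on a quasi-stationary occupation of the reservoirs and integrate out these fast coordinates, thereby obtaining effective transition rates for the slow variable, which I take to be the number of committed mutants, tracked through~$v$ together with the chain tips $c_{i,\ampfactor-2}$ that feed back into the centre.

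The heart of the argument is to compute these effective rates and to show that their ratio is exactly $r^\ampfactor$. The mechanism is the length-$\ampfactor$ directed cycle $v \to x_{i,j} \to c_{i,1} \to \cdots \to c_{i,\ampfactor-2} \to v$: for a mutation to travel from the centre back to the centre it must win $\ampfactor$ consecutive fitness-biased reproduction events, each favouring the mutant by a factor~$r$, so the compounded forward bias is $r^\ampfactor$; symmetrically, erasing a committed mutation around the same cycle is biased by $r^{-\ampfactor}$. Granting that the slow variable then performs a biased random walk with ratio $r^\ampfactor$ on $\{0,\dots,\numvertices\}$ started from a single mutant, the gambler's-ruin formula delivers $(1-r^{-\ampfactor})/(1-r^{-\ampfactor\numvertices})$, which is the claimed value of $\fixprob(\superstar;r)$ in the limit.

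The main obstacle is the reduction step: rigorously justifying the separation of time scales and, above all, proving that the correction terms neglected when integrating out the reservoirs and the chain interiors genuinely vanish as $\numleaves,\ressize\to\infty$. The compounding of the bias to exactly~$r^\ampfactor$ relies on each of the $\ampfactor$ competitions along the cycle being won ``cleanly'', with negligible probability of partial backtracking within a chain or of a second mutant lineage interfering before the first resolves. Controlling these lower-order events \emph{uniformly} in the parameters is delicate, and it is precisely here that the informal sketch must be made quantitative before the limit can be asserted.
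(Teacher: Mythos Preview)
Your proposal cannot succeed, because the proposition you are trying to prove is false. The paper does not prove Proposition~\ref{thm:original}; it \emph{refutes} it. Immediately after stating the proposition the authors write that it ``turns out to be incorrect'', and the body of the paper (Theorem~\ref{thm:kfive}) establishes that for $\ampfactor=5$ the limiting fixation probability is at most $1-1/j(r)$ with $j(r)=(2r^5+r+1)/(r+1)=\Theta(r^4)$, which is strictly smaller than $1-r^{-5}$ for all $r\geq 1.42$.

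What you have written is essentially a reconstruction of the Lieberman--Hauert--Nowak heuristic: separate time scales, reduce to a one-dimensional biased walk, and claim that the forward/backward ratio compounds to exactly $r^{\ampfactor}$ along the length-$\ampfactor$ cycle. You correctly flag the reduction step as the ``main obstacle'', but you treat it as a technical nuisance rather than a genuine failure point. The paper shows, by an exact symbolic computation of the probability that the centre is ever activated when the initial mutant lands in a reservoir, that the effective bias is \emph{not} $r^{\ampfactor}$: the ``lower-order events'' you propose to control---partial backtracking inside a chain, interference from the $\ressize-1$ other reservoir vertices feeding $c_{i,1}$, and the $\numleaves-1$ other chain tips feeding $v$---do not vanish in the limit. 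Concretely, the authors solve the $31\times 31$ linear system for the five-vertex subprocess $\{v,x_{i,j},c_{i,1},c_{i,2},c_{i,3}\}$ as a rational function of $\numleaves$, $\ressize$, $r$, extract the dominant terms, and obtain a limiting success probability of $2r^5/(1+r+2r^5)$ rather than $1-r^{-5}$. So the step where you assert ``each of the $\ampfactor$ competitions along the cycle being won `cleanly''' is precisely where the argument breaks, and it breaks irreparably: no amount of care in the error analysis will recover the claimed $r^{\ampfactor}$ ratio, because the true limit is different.
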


The statement of Proposition~\ref{thm:original} is not sufficiently
precise because the two variables $\numleaves$ and $\ressize$ are
simultaneously taken to infinity without regard to the relative rates
at which they tend to infinity.  Nevertheless, we can make sense of
the proposition by regarding $\ressize$ as a function of~$\numleaves$.
We require that $m(\ell)=\omega(1)$, i.e., that the function $m(\ell)$
is an increasing function of $\ell$ that grows without bound.
For example Nowak~\cite{Nowak2006:EvDyn} considers
$\ressize=\numleaves$.  Since we are only interested in $r>1$, we can
also simplify the expression, using the fact that the denominator
tends to~$1$.

\begin{proposition}[{The $r>1$ case of \cite[Theorem~3]{LHN2005:Superstars}}]
\label{thm:clarified}
    Suppose $r>1$ and $\ressize(\numleaves)=\omega(1)$.  Then
    \begin{equation*}
        \lim_{\numleaves\rightarrow\infty} \fixprob({\onevarsuperstar};r)
            = {1- r^{-\ampfactor}}.
    \end{equation*}
\end{proposition}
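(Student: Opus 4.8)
The plan is to collapse the high-dimensional Moran process on $\onevarsuperstar$ onto a one-dimensional birth--death chain whose forward bias is $r^{\ampfactor}$, and then to read off the limit from the classical gambler's-ruin formula. That formula gives fixation probability $\frac{1-R^{-1}}{1-R^{-\numvertices}}$ for a walk with per-step odds $R:1$ on $\{0,\dots,\numvertices\}$ started from a single occupied site, so establishing $R=r^{\ampfactor}$ together with $\numvertices\to\infty$ would yield exactly the value $1-r^{-\ampfactor}$ asserted in Proposition~\ref{thm:clarified} (and would recover the unsimplified ratio of Proposition~\ref{thm:original}).

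First I would set up a separation of time scales. Because $\ressize(\numleaves)=\omega(1)$, the reservoir vertices dominate the vertex count, so with probability tending to $1$ the initial mutant lies in a reservoir; more importantly, all $\ressize$ reservoir vertices of a given leaf share the single in-neighbour $v$ and the single out-neighbour $c_{i,1}$, so within a leaf they form a large, internally homogeneous sub-population that ought to equilibrate to a near-all-mutant or near-all-resident configuration on a fast time scale, relative to the rare events that flip the status of $v$ or advance a mutation along a chain. Granting this, the macroscopic state is captured by the status of $v$ together with how far each chain has been invaded, and the process reduces to a low-dimensional chain.

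Next I would compute the effective bias of the reduced chain. The intended mechanism is a relay: to convert a resident centre the mutation must traverse a full circuit $v\to x_{i,j}\to c_{i,1}\to\cdots\to c_{i,\ampfactor-2}\to v$, which consists of exactly $\ampfactor$ directed hops. Treating each hop as a small Moran step in which a mutant head is favoured over a resident one by the fitness ratio $r$, one argues that each of the $\ampfactor$ hops contributes a multiplicative factor $r$ to the ratio of the rate of a completed forward circuit to the rate of the competing reversion, so that the net forward-to-backward odds are $r^{\ampfactor}$. Combined with Step~1 and the gambler's-ruin computation, this delivers the claimed limit.

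The hard part --- and the step I expect to resist a rigorous treatment --- is precisely this product-of-$r$ claim for the circuit bias. The $\ampfactor$ hops are not independent two-vertex processes: a partially invaded chain can be reinvaded from behind (for instance a resident reservoir vertex reproducing into a mutant $c_{i,1}$, or a resident centre re-filling the reservoirs), so a mutation can leak out of a chain before it ever reaches $c_{i,\ampfactor-2}$. Controlling these leakage probabilities requires handling the coupled reservoir-and-chain dynamics simultaneously rather than hop by hop, and making the Step~1 equilibration quantitative and uniform enough that the reservoir has genuinely settled before each chain event. I expect this is exactly where an overcount can enter: if even one of the $\ampfactor$ hops fails to deliver a clean factor of $r$, the true bias is only about $r^{\ampfactor-1}$, which for $\ampfactor=5$ would replace the claimed $1-r^{-5}$ by a quantity of the form $1-\Theta(r^{-4})$ --- consistent with the upper bound this paper goes on to prove.
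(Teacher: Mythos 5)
The statement you set out to prove is false, and refuting it is the central point of this paper: the paper contains no proof of Proposition~\ref{thm:clarified}, which is stated only as a precise rendering of Lieberman et al.'s claim before being disproved. Theorem~\ref{thm:kfive} shows that for $\ampfactor=5$ and \emph{any} function $\ressize(\numleaves)=\omega(1)$, if $\lim_{\numleaves\rightarrow\infty}\fixprob({\onevarfivesuperstar};r)$ exists at all, then
\begin{equation*}
    \lim_{\numleaves\rightarrow\infty} \fixprob({\onevarfivesuperstar};r)
        \leq 1-\frac{r+1}{2r^5+r+1},
\end{equation*}
which is strictly below $1-r^{-5}$ for all $r\geq 1.42$. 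So no proof along your lines --- or any other --- can be completed, and your Steps~1 and~2 reproduce exactly the part of Lieberman et al.'s sketch that cannot be made rigorous.

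To your credit, you flagged the fatal step yourself: the claim that each of the $\ampfactor$ hops of the circuit $v\to x_{i,j}\to c_{i,1}\to\cdots\to c_{i,\ampfactor-2}\to v$ contributes a clean multiplicative factor of $r$ to the forward odds. Where your proposal stops at ``I expect this is where an overcount can enter,'' the paper makes the failure concrete by exact computation rather than by repairing the heuristic: it upper-bounds the fixation probability by the probability of the event $\eventF$ that the centre ever reproduces while a mutant, conditions on the initial mutant landing in a reservoir, and solves symbolically (in $L$, $M$ and $r$) the $31$-variable linear system over the mutant configurations of $\{v, x_{i,j}, c_{i,1}, c_{i,2}, c_{i,3}\}$, the only vertices that can be mutants before $\eventF$ occurs. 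Extracting the dominant terms gives $\lim_{\numleaves\rightarrow\infty}\probF = \frac{2r^5}{1+r+2r^5} = 1-\frac{r+1}{2r^5+r+1}$, already bounded away from $1-r^{-5}$. Your closing estimate --- that losing one factor of $r$ yields a limit of the form $1-\Theta(r^{-4})$ --- matches this answer, since $j(r)=\frac{2r^5+r+1}{r+1}=\Theta(r^4)$; but that observation is a refutation of the Proposition, not a route to proving it. The correct response to this exercise is to recognise the statement as false and produce the counterexample computation, which is precisely what the paper does.
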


Lieberman \emph{et al.}\@ give a brief sketch of a proposed proof of
Proposition~\ref{thm:clarified}. 
However, we now show that this sketch cannot be made rigorous for the proposition as stated.
We do this by choosing a fixed   value of $\ampfactor$ (specifically,
$\ampfactor=5$) and  showing that Proposition~\ref{thm:clarified} is
false for this value of~$\ampfactor$.  Specifically, we show the
following:

\begin{theorem} \label{thm:kfive}
    Let $\ressize(\numleaves)$ be any function which is $\omega(1)$.
    Let $j(r) = \frac{2r^5+r+1}{r+1}$.  For any $r>1$, if
    $\lim_{\numleaves\rightarrow\infty}
    \fixprob({\onevarfivesuperstar};r)$ exists, then
    \begin{equation*}
        \lim_{\numleaves\rightarrow\infty} \fixprob({\onevarfivesuperstar};r)
            \leq 1 - \frac{1}{j(r)}.
    \end{equation*}
\end{theorem}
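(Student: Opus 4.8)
The plan is to prove the slightly stronger statement that $\limsup_{\numleaves\to\infty}\fixprob(\onevarfivesuperstar;r)\le 1-1/j(r)$, which gives the theorem since an existing limit equals the $\limsup$. Equivalently, I would show that the extinction probability is at least $1/j(r)-o(1)$. First I would dispose of the initial condition: the reservoir vertices number $\numleaves\ressize$ out of $\numvertices=1+\numleaves(\ressize+3)$ in total, so the initial mutant lies in some reservoir with probability $1-o(1)$, and it suffices to bound the extinction probability conditioned on this event. I would then record the key locality fact: as long as the centre~$v$ is not a mutant, every mutant vertex lies in the single leaf containing the initial mutant, since reservoir vertices can be infected only from~$v$ and each chain vertex only from within its own leaf. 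Thus the relevant dynamics are governed by one ``focus'' leaf together with~$v$. (A caveat I would flag at the outset: extinction is \emph{not} predominantly an early event, because the initial reservoir mutant is overwritten only at rate $\approx(\numleaves\ressize)^{-2}$ and so survives until the centre is reached with probability $1-o(1)$; the mass of $1/j(r)$ must therefore come from the long establishment phase.)

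The substantive step is to pass to a limiting finite Markov chain by separation of time-scales. Writing the per-step transition probabilities in terms of $\numleaves,\ressize,r$, I would classify them: the \emph{fast} transitions are the overwrites of a chain-start $c_{i,1}$ by its $\approx\ressize$ reservoir in-neighbours (total rate $\approx 1/\numleaves$) and the overwrites of~$v$ by its $\approx\numleaves$ chain-end in-neighbours (total rate $\approx 1/\ressize$), both of which dominate the individual infections (rate $\approx r/(\numleaves\ressize)$, since $\numleaves,\ressize\gg r$); the rest are slow. Letting $\numleaves,\ressize\to\infty$, these competing rates collapse to fixed ratios depending only on~$r$. After lumping the symmetric reservoir vertices of a leaf and the symmetric leaves, and after noting that near the extinction boundary only the focus leaf carries the mutation (so the centre's overwrite rate takes its extinction-regime value and the chain becomes \emph{autonomous}), the process on the mutant-configuration of the five features $v,x,c_{1},c_{2},c_{3}$ (with $x$ the indicator that the focus leaf's reservoir carries the mutation) converges to a finite Markov chain whose transition probabilities are functions of~$r$ alone, uniformly over every $\ressize(\numleaves)=\omega(1)$. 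This chain has $2^5=32$ configurations; the all-non-mutant state is absorbing and the remaining $31$ states are transient.

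With the limiting chain in hand I would bound the extinction probability from below by the chain's probability of reaching its absorbing all-non-mutant state, conceding fixation on every other escape (which can only decrease this probability). I would then set up the linear system $(I-Q)\,h=b$ for the absorption probabilities on the $31$ transient states and solve it; inverting the $31\times 31$ symbolic matrix $I-Q$ (for which I would use a computer algebra package) gives the extinction probability as exactly $\frac{r+1}{2r^5+r+1}=1/j(r)$, whence $\fixprob(\onevarfivesuperstar;r)\le 1-1/j(r)+o(1)$. The polynomial $2r^5+r+1$ should be read as the effective ``forward weight'' for pushing the mutation once around the five-cycle $v\to x\to c_{1}\to c_{2}\to c_{3}\to v$ against the fast backflow, which is why it is an $r^5$-correction rather than the naive $r^{5}$ of Proposition~\ref{thm:clarified}.

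The hard part will be making the time-scale separation rigorous rather than heuristic. Concretely, the establishment phase lasts on the order of $(\numleaves\ressize)^2$ steps, and I must show that over this long horizon the fast variables reach their quasi-stationary behaviour and that the accumulated discrepancy between the true process and the limiting chain is $o(1)$ --- and that this holds \emph{uniformly} for every admissible growth rate $\ressize(\numleaves)=\omega(1)$, since the relative sizes of $1/\numleaves$ and $1/\ressize$ are not fixed. A secondary difficulty is verifying that lumping the $\approx\numleaves\ressize$ symmetric vertices and linearising at the extinction boundary discard no probability mass that could rescue fixation, so that bounding extinction below by the reduced chain is legitimate. Once these limiting arguments are in place, the remainder is the large but mechanical matrix computation, and the conclusion $\lim_{\numleaves\to\infty}\fixprob(\onevarfivesuperstar;r)\le 1-1/j(r)$ follows whenever the limit exists.
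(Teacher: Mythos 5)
Your high-level scaffolding coincides with the paper's: condition on the initial mutant landing in a reservoir (probability $1-o(1)$), observe that until the centre reproduces while a mutant all mutants are confined to the five vertices $v,x,c_1,c_2,c_3$ of the focus leaf, concede fixation the moment that escape event occurs, and extract the limiting escape probability from a $31$-variable linear system solved by computer algebra; the target value $1-1/j(r)$ also agrees. The genuine gap is your order of operations: you want to pass to a ``limiting finite Markov chain whose transition probabilities are functions of $r$ alone'' \emph{first}, and only then solve the $31\times 31$ system. No such chain exists at the level of one-step transitions. From the state $\{x,c_1\}$ the chain-start is overwritten by the other $\ressize-1$ reservoir vertices with probability $\frac{\ressize-1}{\ressize-1+r+1/(\numleaves\ressize)}\to 1$, and from $\{c_3,v\}$ the centre is overwritten by the other $\numleaves-1$ chain-ends with probability tending to $1$; so the naive limit chain merely oscillates among low states, never advancing and never dying, and its absorption probabilities are \emph{not} the limits of the true ones. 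The real competition is between vanishing quantities (advancing past $c_1$ with probability $\approx r/\ressize$ per loop, extinction with probability $\approx 1/(r\numleaves\ressize)$ per loop, success with probability $\approx r/\numleaves$ per visit to a mutant centre), and the ratio of the two ``fast'' rates you single out is $\ressize:\numleaves$, which is not a function of $r$ and need not even converge for a general $\ressize(\numleaves)=\omega(1)$. So the assertion that the rates ``collapse to fixed ratios depending only on $r$'' is false as stated, and the step you defer as ``the hard part'' --- justifying the interchange of the $\numleaves\to\infty$ limit with the absorption computation --- is the entire content of the theorem. Indeed, exactly this kind of heuristic limit interchange is what produced the erroneous $1-r^{-k}$ claim that the theorem refutes.

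The paper sidesteps all of this by reversing your two steps: it solves the linear system \emph{exactly}, keeping $L=\numleaves$ and $M=\ressize(\numleaves)$ as symbolic parameters, so that $\probF$ is an exact rational function of $L$, $M$ and $r$ for every finite superstar --- no lumping, linearisation, or time-scale argument is needed, because the $31$-state chain is an exact description of the process up to the escape event. Only then is the limit taken, by identifying the undominated terms of numerator and denominator; these are $2r^5(1+r)\,L^{14}M^{14}(L+M)^5$ and $(1+2r+r^2+2r^5+2r^6)\,L^{14}M^{14}(L+M)^5$, and the common factor $L^{14}M^{14}(L+M)^5$ is precisely why the answer $1-1/j(r)$ is independent of the relative growth rates of $\numleaves$ and $\ressize$ --- a uniformity you would otherwise have to prove by hand. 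If you wish to rescue your route, you must replace the nonexistent limit chain by an effective chain of milestone events with quantitative error bounds valid uniformly over all $\ressize(\numleaves)=\omega(1)$; that bookkeeping amounts to redoing the same linear algebra with less precision, so the symbolic solve-then-limit order is both simpler and the only rigorous part of your plan that is currently missing.
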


Note that Theorem~\ref{thm:kfive} applies for any function
$\ressize(\numleaves)=\omega(1)$.  In particular, it shows that, for
all~$r>1$, if $\lim_{\numleaves\rightarrow\infty}
\fixprob({\samefivesuperstar};r)$ exists then
\begin{equation*}
    \lim_{\numleaves\rightarrow\infty} \fixprob({\samefivesuperstar};r)
        \leq 1-\frac{1}{j(r)},
\end{equation*}
whereas Proposition~\ref{thm:clarified} would give the contrary
conclusion
\begin{equation*}
    \lim_{\numleaves\rightarrow\infty} \fixprob({\samefivesuperstar};r)
        = 1- \frac{1}{r^{5}},
\end{equation*}
where $1- \frac{1}{r^{5}} > 1-\frac{1}{j(r)}$ for all sufficiently
large~$r$ (specifically, for $r\geq 1.42$) since $j(r) = \Theta(r^4)$.

\begin{proof}[Proof of Theorem~\ref{thm:kfive}]
    Let $\ressize(\numleaves)$ be any function which is $\omega(1)$.
    Consider the generalized Moran process on~$\onevarfivesuperstar$.  Let
    $\eventR$ be the event that the initial mutant is placed on a
    reservoir vertex and let $\eventF$ be the event that, at some time
    during the execution of the process, the centre vertex~$v$ is
    occupied by a mutant and is chosen for reproduction.  Let
    $\probNoR$ be the probability that $\eventR$ does not occur and
    let $\probF$ be the probability that $\eventF$ occurs, conditioned
    on the fact that event~$\eventR$ occurs.  Clearly,
    \begin{align*}
        \fixprob({\onevarfivesuperstar};r) \leq \prob[\eventF] 
            &\leq \probNoR + \probF \\
            &= \frac{1+ 3 \numleaves }{\numvertices} + \probF
             = \frac{1+ 3 \numleaves }
                    { 1+ \numleaves (\ressize(\numleaves) + 3)} + \probF.
    \end{align*}
    Let $h(r) = \lim_{\numleaves\rightarrow\infty} \probF$.  We will
    show that this limit exists for every $r>0$, and that $h(r) = 1 -
    \frac{1}{j(r)}$.  From the calculation above, it is clear that,
    for every $r>1$, if $ \lim_{\numleaves\rightarrow\infty}
    \fixprob({\onevarfivesuperstar};r)$ exists then
    \begin{equation*}
        \lim_{\numleaves\rightarrow\infty} \fixprob({\onevarfivesuperstar};r)
            \leq \lim_{\numleaves\rightarrow\infty} \probF = h(r).
    \end{equation*}
 
    In fact, the value $\probF$ is a rational function in the
    variables~$\numleaves$, $\ressize(\numleaves)$ and~$r$. This
    rational function can be calculated by solving a linear system.
    We solved this linear system using Mathematica --- the
    corresponding Mathematica program is in
    Appendix~\ref{app:Mathematica}.  The program consists of three
    main parts.  The first block of code defines useful constants, the
    bulk of the file defines the system of linear equations and the
    last four blocks solve the system for all variables and extract
    the solution of interest.

    In the Mathematica program, $V$ denotes the vertex~$v$, $X$
    denotes the reservoir vertex~$x_{i,j}$ in which the initial mutant
    is placed, and $O$, $P$ and $Q$ represent the vertices in the
    corresponding chain ($c_{i,1}$, $c_{i,2}$ and $c_{i,3}$,
    respectively).  Let $\Psi = \{V, X, O, P, Q\}$.  If we start the
    generalized Moran process from the state in which vertex~$X$ is
    occupied by a mutant, and no other vertices are occupied by
    mutants, then no vertices outside~$\Psi$ can be occupied by
    mutants until event $\eventF$ occurs.  In the program, $L$ is a
    variable representing the quantity~$\numleaves$ and $M$ is a
    variable representing the quantity~$\ressize(\numleaves)$.
    Let~$\Omega$ be the state space of the generalized Moran process,
    which contains one state for each subset of $\Psi$.  The state
    corresponding to subset $S \in \Omega$ is the state in which
    the vertices in~$S$ are occupied by mutants and no other vertices are
    occupied by mutants.  We use the
    program variable $FS$ to denote the probability that
    event~$\eventF$ occurs, starting from state~$S$.

    For each state $S$, $EQS$ is a
    linear equation relating $FS$ to to the other variables in $\{
    F{S'} \mid S' \in \Omega\}$.  The linear equations can be derived
    by considering the transitions of the system.  To aid the reader,
    we give an example.  Consider the state $XO$ in which vertices $X$
    and $O$ are occupied by mutants.  From this state, three
    transitions are possible.  (We write $W$ for the total fitness of
    vertices in the state under consideration.)
    \begin{itemize}
    \item With probability $\tfrac{r}{W}$, vertex $O$ is chosen for
        reproduction. Vertex~$P$ becomes a mutant so the new state is
        $XOP$.
    \item With probability $\tfrac{1}{W}\times \tfrac{1}{ L M}$,
        vertex $V$ is chosen for reproduction.  From among its $L
        M$ neighbours, it chooses vertex $X$ to update (removing the
        mutant from vertex~$X$), so the new state is $O$.
    \item With probability $\frac{M-1}{W}$, one of the vertices in
        $\{x_{i,j} \mid j\in[\ressize(\numleaves)]\}\setminus X$ is
        chosen for reproduction, removing the mutant from vertex~$O$,
        so the new state is $X$.
\end{itemize}
    Thus, we have the equality
    \begin{equation*}
        FXO = \frac{\tfrac{r}{W} FXOP + \tfrac{1}{W}\tfrac{1}{L M} FO
                        + \frac{M-1}{W} FX}
                   {\tfrac{r}{W} + \tfrac{1}{W} \tfrac{1}{L M} 
                        + \frac{M-1}{W} }.
    \end{equation*}

    This equality (which we called $EQXO$) is included in the linear
    system constructed in the Mathematica program (except that we
    normalized by multiplying the numerator and denominator by~$W$).
    The constant $DXO$ is defined to stand for the denominator of this
    expression to enhance readability.  The constants \emph{XonO},
    \emph{XoffO} and so on refer to the probabilities that,
    respectively, the vertex $O$ is made a mutant or a non-mutant
    (``switched on or off'') by $X$ (again, normalized by multiplying
    by $W$).

    We similarly derive an equation $EQS$ for every non-empty state
    $S\in \Omega$.  Clearly, if $S$ is the state in which no vertices
    are mutants then $FS=0$, so we can account for this directly in
    the other equations.  The system therefore consists of $31$
    equations in $31$ variables with one variable $FS$ for each
    non-empty state~$S\in\Omega$.  The desired quantity~$\probF$ is
    equal to~$FX$, which can therefore be calculated by (symbolically)
    solving the linear system.
 
    The solution for~$FX$ is a rational function in~$L$, $M$ and~$r$.
    The numerator of this rational function can be written as
    $\sum_{i=0}^{19}\sum_{j=0}^{19} c_{i,j}(r) L^i {M}^j$.  We say
    that the term $c_{i,j}(r) L^i M^j$ is \emph{dominated} by the term
    $c_{i',j'}(r) L^{i'} M^{j'}$ if $c_{i',j'}\neq 0$, $i\leq i'$,
    $j\leq j'$ and $i+j<i'+j'$.  The sum of the undominated terms in
    the numerator  is
    \begin{equation*}
        2 r^5(1+r) L^{14} M^{14} {(L+M)}^5.
    \end{equation*}
    Similarly, the sum of the undominated terms in the denominator is
    \begin{equation*}
       (1+ 2 r + r^2 + 2r^5  + 2 r^6) L^{14} M^{14} {(L+M)^5}.
    \end{equation*}

    Thus, for any fixed $r$,
    \begin{equation*}
        \lim_{\numleaves\rightarrow\infty} \probF 
            = \frac{2 r^5(1+r)}{1+ 2 r + r^2 + 2r^5  + 2 r^6}
            = \frac{2 r^5 }{1+  r + 2r^5 }
            = 1 - \frac{1+r}{1+r+2r^5}.
    \end{equation*}
    Since $j(r) = \frac{2r^5+r+1}{r+1}$, we have
    $\lim_{\numleaves\rightarrow\infty} \probF = 1 - \frac{1}{j(r)}$.
\end{proof} 

 \section{Simulations on superstars}
\label{sec:Simulation}

We simulated the generalized Moran process on superstars with $\ell=m=200$ and for
$k\in\{3, 4, 5, 6, 7, 12\}$ and $r\in\{1.1, 2, 3, 5, 10, 50\}$.  Thus,
the size of the graphs ranges from approximately 40,000 to
approximately 42,000 vertices.  For each choice of parameters, we ran
2,500 simulations for $r\leq 5$ and 10,000 for $r\geq 10$.  The
results are presented in Table~\ref{table:Superstar-sim} and
Figure~\ref{fig:Superstar-sim}.

\begin{table}
\newcommand{\ci}[1]{\tiny [#1]}
\begin{center}
\begin{tabular}{rccccccc}
    \hline
    {} & $r=1.1$ & $r=2$ & $r=3$ & $r=5$ & $r=10$ & $r=50$ \\
    \hline
    \multirow{2}{*}{$k=3$}
           & 0.248 & 0.872 & 0.951 & 0.980 & 0.994 & 0.995 \\[-1ex]
           & \ci{0.225, 0.273} & \ci{0.852, 0.889} & \ci{0.938, 0.962}
           & \ci{0.971, 0.987} & \ci{0.991, 0.995} & \ci{0.993, 0.997} \\[1ex]
    \multirow{2}{*}{$k=4$}
           & 0.292 & 0.923 & 0.979 & 0.986 & 0.991 & 0.995 \\[-1ex]
           & \ci{0.267, 0.318} & \ci{0.906, 0.937} & \ci{0.969, 0.986}
           & \ci{0.977, 0.991} & \ci{0.988, 0.994} & \ci{0.993, 0.997} \\[1ex]
    \multirow{2}{*}{$k=5$}
           & 0.333 & 0.938 & 0.978 & 0.989 & 0.990 & 0.995 \\[-1ex]
           & \ci{0.307, 0.360} & \ci{0.923, 0.950} & \ci{0.969, 0.985}
           & \ci{0.981, 0.994} & \ci{0.987, 0.993} & \ci{0.993, 0.997} \\[1ex]
    \multirow{2}{*}{$k=6$}
           & 0.362 & 0.934 & 0.970 & 0.983 & 0.987 & 0.996 \\[-1ex]
           & \ci{0.336, 0.389} & \ci{0.918, 0.946} & \ci{0.959, 0.978}
           & \ci{0.974, 0.989} & \ci{0.984, 0.990} & \ci{0.994, 0.998} \\[1ex]
    \multirow{2}{*}{$k=7$}
           & 0.374 & 0.948 & 0.972 & 0.978 & 0.986 & 0.996 \\[-1ex]
           & \ci{0.347, 0.402} & \ci{0.934, 0.960} & \ci{0.962, 0.980}
           & \ci{0.969, 0.985} & \ci{0.982, 0.989} & \ci{0.996, 0.998} \\[1ex]
    \multirow{2}{*}{$k=12$}
           & 0.419 & 0.928 & 0.953 & 0.962 & 0.982 & 0.994 \\[-1ex]
           & \ci{0.391, 0.447} & \ci{0.913, 0.942} & \ci{0.939, 0.963}
           & \ci{0.950, 0.972} & \ci{0.978, 0.985} & \ci{0.992, 0.996} \\
    \hline
\end{tabular}
\caption{Superstar fixation probabilities obtained by simulation.  The
  range in small type is the 99.5\% confidence interval, which is not
  symmetric about the sample mean.  Sample size is 2,500 simulations
  for $r\leq 5$ and 10,000 for $r\geq 10$.}
\label{table:Superstar-sim}
\end{center}
\end{table}

\begin{figure}
\hspace{-1.5cm}
\begin{tabular}{ll}
\begin{picture}(0,0)%
\includegraphics{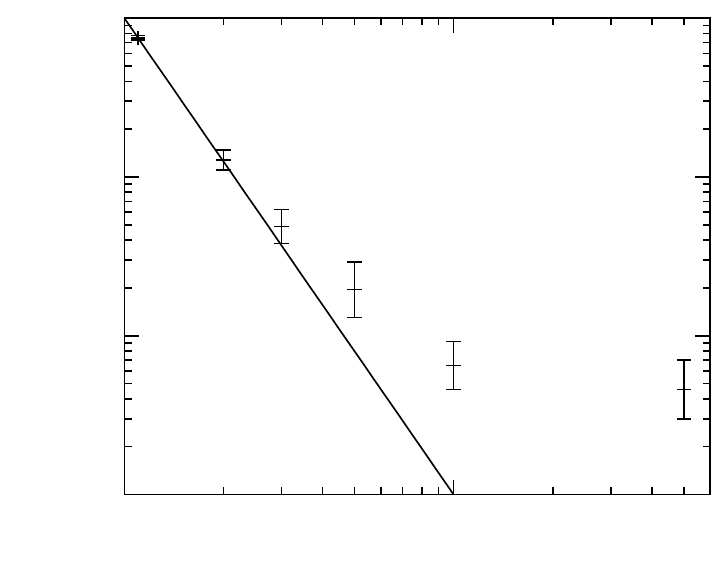}%
\end{picture}%
\setlength{\unitlength}{4144sp}%
\begingroup\makeatletter\ifx\SetFigFont\undefined%
\gdef\SetFigFont#1#2#3#4#5{%
  \reset@font\fontsize{#1}{#2pt}%
  \fontfamily{#3}\fontseries{#4}\fontshape{#5}%
  \selectfont}%
\fi\endgroup%
\begin{picture}(3257,2659)(1078,-2787)
\put(1575,-2440){\makebox(0,0)[rb]{\smash{{\SetFigFont{10}{12.0}{\familydefault}{\mddefault}{\updefault} 0.001}}}}
\put(1575,-1714){\makebox(0,0)[rb]{\smash{{\SetFigFont{10}{12.0}{\familydefault}{\mddefault}{\updefault} 0.01}}}}
\put(1575,-988){\makebox(0,0)[rb]{\smash{{\SetFigFont{10}{12.0}{\familydefault}{\mddefault}{\updefault} 0.1}}}}
\put(1575,-262){\makebox(0,0)[rb]{\smash{{\SetFigFont{10}{12.0}{\familydefault}{\mddefault}{\updefault} 1}}}}
\put(1646,-2559){\makebox(0,0)[b]{\smash{{\SetFigFont{10}{12.0}{\familydefault}{\mddefault}{\updefault} 1}}}}
\put(3151,-2559){\makebox(0,0)[b]{\smash{{\SetFigFont{10}{12.0}{\familydefault}{\mddefault}{\updefault} 10}}}}
\put(1189,-1292){\rotatebox{90.0}{\makebox(0,0)[b]{\smash{{\SetFigFont{10}{12.0}{\familydefault}{\mddefault}{\updefault}Extinction probability}}}}}
\put(2984,-2737){\makebox(0,0)[b]{\smash{{\SetFigFont{10}{12.0}{\familydefault}{\mddefault}{\updefault}$r$}}}}
\put(1708,-352){\makebox(0,0)[lb]{\smash{{\SetFigFont{10}{12.0}{\familydefault}{\mddefault}{\updefault} *}}}}
\put(2099,-909){\makebox(0,0)[lb]{\smash{{\SetFigFont{10}{12.0}{\familydefault}{\mddefault}{\updefault} *}}}}
\put(4005,-389){\makebox(0,0)[b]{\smash{{\SetFigFont{10}{12.0}{\familydefault}{\mddefault}{\updefault}$k=3$}}}}
\end{picture}%
&
\begin{picture}(0,0)%
\includegraphics{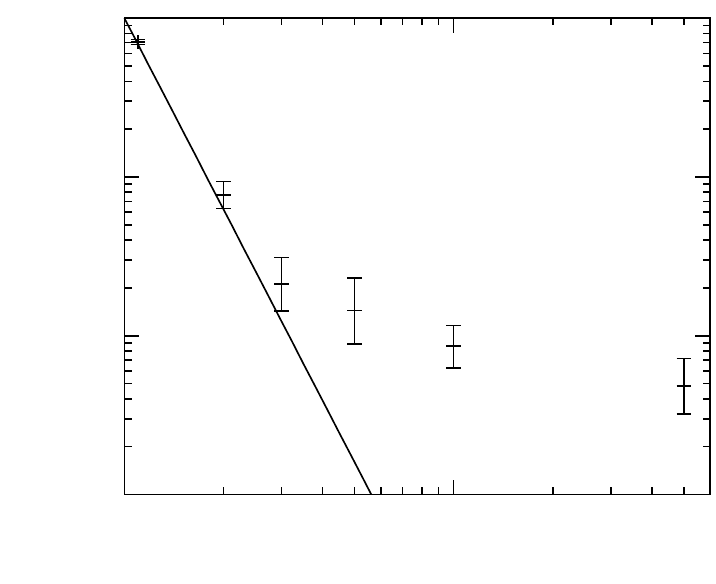}%
\end{picture}%
\setlength{\unitlength}{4144sp}%
\begingroup\makeatletter\ifx\SetFigFont\undefined%
\gdef\SetFigFont#1#2#3#4#5{%
  \reset@font\fontsize{#1}{#2pt}%
  \fontfamily{#3}\fontseries{#4}\fontshape{#5}%
  \selectfont}%
\fi\endgroup%
\begin{picture}(3257,2659)(1078,-2787)
\put(1575,-2440){\makebox(0,0)[rb]{\smash{{\SetFigFont{10}{12.0}{\familydefault}{\mddefault}{\updefault} 0.001}}}}
\put(1575,-1714){\makebox(0,0)[rb]{\smash{{\SetFigFont{10}{12.0}{\familydefault}{\mddefault}{\updefault} 0.01}}}}
\put(1575,-988){\makebox(0,0)[rb]{\smash{{\SetFigFont{10}{12.0}{\familydefault}{\mddefault}{\updefault} 0.1}}}}
\put(1575,-262){\makebox(0,0)[rb]{\smash{{\SetFigFont{10}{12.0}{\familydefault}{\mddefault}{\updefault} 1}}}}
\put(1646,-2559){\makebox(0,0)[b]{\smash{{\SetFigFont{10}{12.0}{\familydefault}{\mddefault}{\updefault} 1}}}}
\put(3151,-2559){\makebox(0,0)[b]{\smash{{\SetFigFont{10}{12.0}{\familydefault}{\mddefault}{\updefault} 10}}}}
\put(1189,-1292){\rotatebox{90.0}{\makebox(0,0)[b]{\smash{{\SetFigFont{10}{12.0}{\familydefault}{\mddefault}{\updefault}Extinction probability}}}}}
\put(2984,-2737){\makebox(0,0)[b]{\smash{{\SetFigFont{10}{12.0}{\familydefault}{\mddefault}{\updefault}$r$}}}}
\put(1708,-371){\makebox(0,0)[lb]{\smash{{\SetFigFont{10}{12.0}{\familydefault}{\mddefault}{\updefault} *}}}}
\put(4005,-389){\makebox(0,0)[b]{\smash{{\SetFigFont{10}{12.0}{\familydefault}{\mddefault}{\updefault}$k=4$}}}}
\end{picture}%
\\
\begin{picture}(0,0)%
\includegraphics{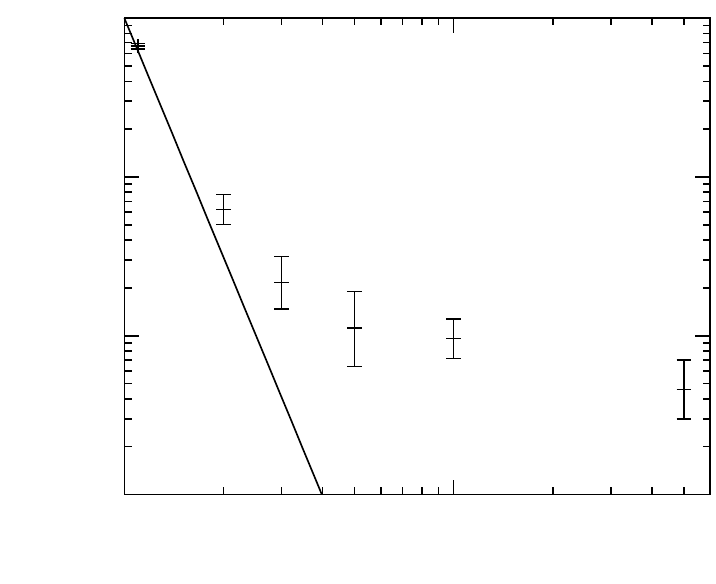}%
\end{picture}%
\setlength{\unitlength}{4144sp}%
\begingroup\makeatletter\ifx\SetFigFont\undefined%
\gdef\SetFigFont#1#2#3#4#5{%
  \reset@font\fontsize{#1}{#2pt}%
  \fontfamily{#3}\fontseries{#4}\fontshape{#5}%
  \selectfont}%
\fi\endgroup%
\begin{picture}(3257,2659)(1078,-2787)
\put(1575,-2440){\makebox(0,0)[rb]{\smash{{\SetFigFont{10}{12.0}{\familydefault}{\mddefault}{\updefault} 0.001}}}}
\put(1575,-1714){\makebox(0,0)[rb]{\smash{{\SetFigFont{10}{12.0}{\familydefault}{\mddefault}{\updefault} 0.01}}}}
\put(1575,-988){\makebox(0,0)[rb]{\smash{{\SetFigFont{10}{12.0}{\familydefault}{\mddefault}{\updefault} 0.1}}}}
\put(1575,-262){\makebox(0,0)[rb]{\smash{{\SetFigFont{10}{12.0}{\familydefault}{\mddefault}{\updefault} 1}}}}
\put(1646,-2559){\makebox(0,0)[b]{\smash{{\SetFigFont{10}{12.0}{\familydefault}{\mddefault}{\updefault} 1}}}}
\put(3151,-2559){\makebox(0,0)[b]{\smash{{\SetFigFont{10}{12.0}{\familydefault}{\mddefault}{\updefault} 10}}}}
\put(1189,-1292){\rotatebox{90.0}{\makebox(0,0)[b]{\smash{{\SetFigFont{10}{12.0}{\familydefault}{\mddefault}{\updefault}Extinction probability}}}}}
\put(2984,-2737){\makebox(0,0)[b]{\smash{{\SetFigFont{10}{12.0}{\familydefault}{\mddefault}{\updefault}$r$}}}}
\put(4005,-389){\makebox(0,0)[b]{\smash{{\SetFigFont{10}{12.0}{\familydefault}{\mddefault}{\updefault}$k=5$}}}}
\end{picture}%
&
\begin{picture}(0,0)%
\includegraphics{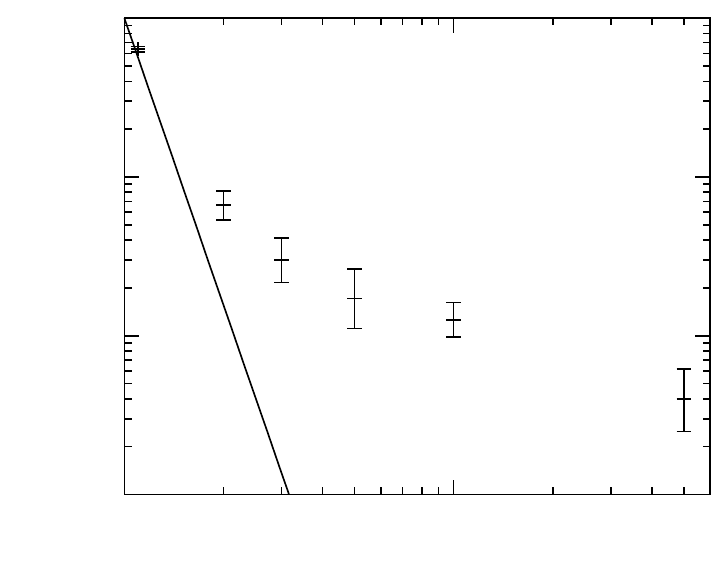}%
\end{picture}%
\setlength{\unitlength}{4144sp}%
\begingroup\makeatletter\ifx\SetFigFont\undefined%
\gdef\SetFigFont#1#2#3#4#5{%
  \reset@font\fontsize{#1}{#2pt}%
  \fontfamily{#3}\fontseries{#4}\fontshape{#5}%
  \selectfont}%
\fi\endgroup%
\begin{picture}(3257,2659)(1078,-2787)
\put(1575,-2440){\makebox(0,0)[rb]{\smash{{\SetFigFont{10}{12.0}{\familydefault}{\mddefault}{\updefault} 0.001}}}}
\put(1575,-1714){\makebox(0,0)[rb]{\smash{{\SetFigFont{10}{12.0}{\familydefault}{\mddefault}{\updefault} 0.01}}}}
\put(1575,-988){\makebox(0,0)[rb]{\smash{{\SetFigFont{10}{12.0}{\familydefault}{\mddefault}{\updefault} 0.1}}}}
\put(1575,-262){\makebox(0,0)[rb]{\smash{{\SetFigFont{10}{12.0}{\familydefault}{\mddefault}{\updefault} 1}}}}
\put(1646,-2559){\makebox(0,0)[b]{\smash{{\SetFigFont{10}{12.0}{\familydefault}{\mddefault}{\updefault} 1}}}}
\put(3151,-2559){\makebox(0,0)[b]{\smash{{\SetFigFont{10}{12.0}{\familydefault}{\mddefault}{\updefault} 10}}}}
\put(1189,-1292){\rotatebox{90.0}{\makebox(0,0)[b]{\smash{{\SetFigFont{10}{12.0}{\familydefault}{\mddefault}{\updefault}Extinction probability}}}}}
\put(2984,-2737){\makebox(0,0)[b]{\smash{{\SetFigFont{10}{12.0}{\familydefault}{\mddefault}{\updefault}$r$}}}}
\put(4005,-389){\makebox(0,0)[b]{\smash{{\SetFigFont{10}{12.0}{\familydefault}{\mddefault}{\updefault}$k=6$}}}}
\end{picture}%
\\
\begin{picture}(0,0)%
\includegraphics{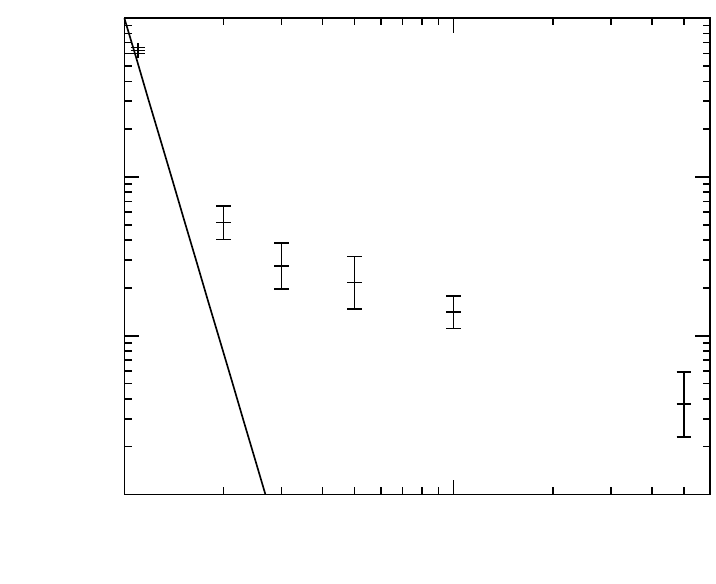}%
\end{picture}%
\setlength{\unitlength}{4144sp}%
\begingroup\makeatletter\ifx\SetFigFont\undefined%
\gdef\SetFigFont#1#2#3#4#5{%
  \reset@font\fontsize{#1}{#2pt}%
  \fontfamily{#3}\fontseries{#4}\fontshape{#5}%
  \selectfont}%
\fi\endgroup%
\begin{picture}(3257,2659)(1078,-2787)
\put(1575,-2440){\makebox(0,0)[rb]{\smash{{\SetFigFont{10}{12.0}{\familydefault}{\mddefault}{\updefault} 0.001}}}}
\put(1575,-1714){\makebox(0,0)[rb]{\smash{{\SetFigFont{10}{12.0}{\familydefault}{\mddefault}{\updefault} 0.01}}}}
\put(1575,-988){\makebox(0,0)[rb]{\smash{{\SetFigFont{10}{12.0}{\familydefault}{\mddefault}{\updefault} 0.1}}}}
\put(1575,-262){\makebox(0,0)[rb]{\smash{{\SetFigFont{10}{12.0}{\familydefault}{\mddefault}{\updefault} 1}}}}
\put(1646,-2559){\makebox(0,0)[b]{\smash{{\SetFigFont{10}{12.0}{\familydefault}{\mddefault}{\updefault} 1}}}}
\put(3151,-2559){\makebox(0,0)[b]{\smash{{\SetFigFont{10}{12.0}{\familydefault}{\mddefault}{\updefault} 10}}}}
\put(1189,-1292){\rotatebox{90.0}{\makebox(0,0)[b]{\smash{{\SetFigFont{10}{12.0}{\familydefault}{\mddefault}{\updefault}Extinction probability}}}}}
\put(2984,-2737){\makebox(0,0)[b]{\smash{{\SetFigFont{10}{12.0}{\familydefault}{\mddefault}{\updefault}$r$}}}}
\put(4005,-389){\makebox(0,0)[b]{\smash{{\SetFigFont{10}{12.0}{\familydefault}{\mddefault}{\updefault}$k=7$}}}}
\end{picture}%
&
\begin{picture}(0,0)%
\includegraphics{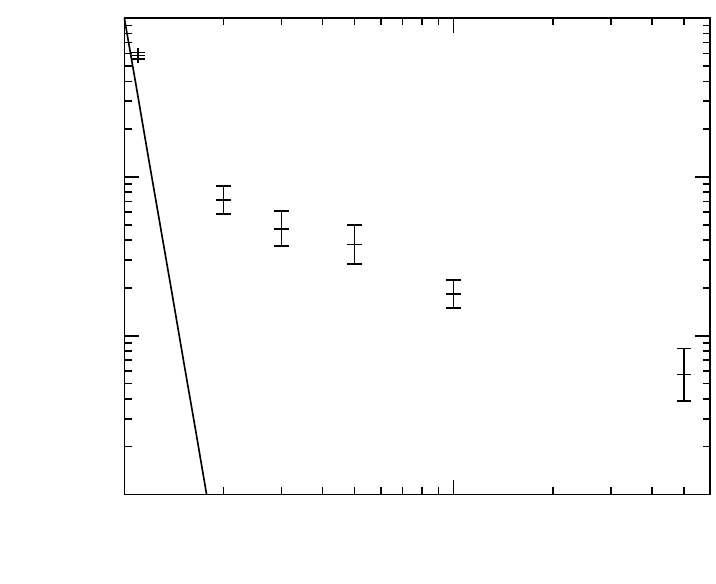}%
\end{picture}%
\setlength{\unitlength}{4144sp}%
\begingroup\makeatletter\ifx\SetFigFont\undefined%
\gdef\SetFigFont#1#2#3#4#5{%
  \reset@font\fontsize{#1}{#2pt}%
  \fontfamily{#3}\fontseries{#4}\fontshape{#5}%
  \selectfont}%
\fi\endgroup%
\begin{picture}(3257,2659)(1078,-2787)
\put(1575,-2440){\makebox(0,0)[rb]{\smash{{\SetFigFont{10}{12.0}{\familydefault}{\mddefault}{\updefault} 0.001}}}}
\put(1575,-1714){\makebox(0,0)[rb]{\smash{{\SetFigFont{10}{12.0}{\familydefault}{\mddefault}{\updefault} 0.01}}}}
\put(1575,-988){\makebox(0,0)[rb]{\smash{{\SetFigFont{10}{12.0}{\familydefault}{\mddefault}{\updefault} 0.1}}}}
\put(1575,-262){\makebox(0,0)[rb]{\smash{{\SetFigFont{10}{12.0}{\familydefault}{\mddefault}{\updefault} 1}}}}
\put(1646,-2559){\makebox(0,0)[b]{\smash{{\SetFigFont{10}{12.0}{\familydefault}{\mddefault}{\updefault} 1}}}}
\put(3151,-2559){\makebox(0,0)[b]{\smash{{\SetFigFont{10}{12.0}{\familydefault}{\mddefault}{\updefault} 10}}}}
\put(1189,-1292){\rotatebox{90.0}{\makebox(0,0)[b]{\smash{{\SetFigFont{10}{12.0}{\familydefault}{\mddefault}{\updefault}Extinction probability}}}}}
\put(2984,-2737){\makebox(0,0)[b]{\smash{{\SetFigFont{10}{12.0}{\familydefault}{\mddefault}{\updefault}$r$}}}}
\put(4005,-389){\makebox(0,0)[b]{\smash{{\SetFigFont{10}{12.0}{\familydefault}{\mddefault}{\updefault}$k=12$}}}}
\end{picture}%
\end{tabular}
\caption{Extinction probabilities for superstars with $\ell=m=200$ and
    $k$ as shown.  The straight line is $r^{-k}$ and the data points
    are the simulated probabilities.  The error bars indicate 99.5\%
    confidence intervals and $r^{-k}$ falls outside the confidence
    interval in every case apart from the three points marked ${}^*$.}
\label{fig:Superstar-sim}
\end{figure}

For clarity, we have plotted extinction probability (i.e., $1-f(G;r)$)
rather than fixation probability, and we have plotted on a log--log
scale.  The straight line shows the value of $r^{-k}\!$, i.e., the
extinction probability  predicted by Proposition~\ref{thm:clarified} and
the points are the fixation probabilities
derived by simulation, along with their 99.5\% confidence
intervals.\footnote{Brown, Cai and DasGupta~\cite{BCD2001:BinInt} and others
  have shown that the standard (Wald) binomial confidence interval of
  $p\pm z_{\alpha/2} \sqrt{p(1-p)/n}$ has severely chaotic behaviour,
  especially when $p$ is close to 0 or~1, as here, even for values of
  $n$ in the thousands.  This unpredictably produces confidence
  intervals with much lower coverage probabilities than the nominal
  confidence level --- often by 10\% or more.  Following the
  discussion in~\cite{BCD2001:BinInt}, we use what they call the
  Agresti--Coull interval, which applies a small adjustment to $p$ and
  $n$ before computing the interval.  This avoids the erratic
  behaviour of the Wald interval and gives coverage probabilities that
  are closer to the nominal confidence level and generally exceed it
  for $p$ close to 0 or~1.}
The only parameter values that we simulated for which $r^{-k}$ falls
within the 99.5\%
confidence interval of our simulations are $k=3$ and $r\in\{1.1, 2\}$
and $k=4$, $r=1.1$; these are the points marked~${}^*$ in
Figure~\ref{fig:Superstar-sim}.  In all other cases, the extinction
probabilities are significantly higher than the claimed value of
$r^{-k}\!$, with the disparity growing as $k$
increases.\footnote{Quantitatively, these results would only be weakened
    slightly by using the standard Wald interval: $1-r^{-k}$ would be
    within the confidence interval for the additional points $k=3$,
    $r=3$ and $k=4$, $r=2$.}

Reading down the columns of Table~\ref{table:Superstar-sim}, it
can be seen that for $r\geq 3$, the fixation probabilities do not
increase towards~1 but tail off for larger values of $k$.  In
particular, the lower end of the 99.5\% confidence interval for $k=5$
is greater than the upper end of the corresponding interval for $k=12$
for $r\in\{3,5,10\}$.  This observation does suggest that the claimed
fixation probability in Proposition~\ref{thm:clarified} may be qualitatively wrong
in the sense that the fixation probability might not tend to~$1$ as $k$~increases.
However, we are inclined to believe that the proposition is
qualitatively correct, and that the tailing off in the data is explained
by the fact that, for large values of~$k$, the values of $\ell$ and $m$ which we
were able to simulate may may have been too small for the limiting behaviour to be apparent.

One can also consider the degenerate case $k=2$, which has chains of length
zero (i.e., direct edges) from the reservoir vertices to the centre:
that is, the superstar $S_{\ell,m}^2$ is just the complete bipartite
graph $K_{1,\ell m}$, also known as a ``star''.  Large stars have
fixation probability tending towards
$1-r^{-2}$ (see, for example, \cite{Broom}) which is 0.9996 for $r=50$.  This is
above the upper end of the 99.5\% confidence interval of all our
$r=50$ superstar simulations, but again we suspect that $\ell=m=200$ is   too small for our
simulations to exhibit limiting behaviour in that case.

Note that each graph in Figure~\ref{fig:Superstar-sim} corresponds to a
row of the table.  For fixed~$k$, the fixation probability does indeed
tend to~1 as $r$ increases and this is easily seen to hold for any
strongly connected graph.

Lieberman \emph{et al.}\@ simulated only the case $r=1.1$
with $k=3$ and $k=4$, on graphs of around 10,000 vertices (they do not
state what values of $\ell$ and $m$ they used).  Their results in
these cases are consistent with ours: they measure fixation
probabilities of approximately 0.25 and 0.30 for $k=3$ and $k=4$,
respectively.  For $r$ close to~1 and small~$k$, the fixation
probability is reasonably close to $1-r^{-k}\!$.

The reader is referred to the ancillary files for the simulation code,
a description of it and a proof of its correctness.  As Barbosa
\emph{et al.}\@ point out \cite{Barbosa2}, it is difficult to simulate
on large graphs because of resource constraints.  We use various
time-saving tricks that they discuss such as skipping simulation steps
where nothing changes \cite{Gil1977:ExactSimulation}.  We also
describe several optimizations that we use that are specific to
superstars.

\bibliographystyle{plain}
\bibliography{superstarfixprob}

\newpage
\appendix

\section{Mathematica code}
\label{app:Mathematica}

Here is the text of the Mathematica program that we ran to solve the
linear system.  We explain the code in the proof of
Theorem~\ref{thm:kfive}.

\begin{sloppypar}
\small
\begin{verbatim}

XonO := r;
XoffO := M;
OonP := r;
OoffP := 1;
PonQ := r;
PoffQ := 1;
QonV := r;
QoffV := L;
Vgo := r;
VoffX := 1/ (L * M);
otherXoffO := M - 1;
otherQoffV := L - 1;

DX := VoffX + XonO;
EQX := FX == XonO * FXO/DX;
DO := OonP + XoffO;
EQO := FO == OonP * FOP/DO;
DP := PonQ + OoffP;
EQP := FP == PonQ * FPQ/DP;
DQ := QonV + PoffQ;
EQQ := FQ == QonV * FQV/DQ;
DV := Vgo + QoffV;
EQV := FV == Vgo/DV;

DXO := OonP + VoffX + otherXoffO;
EQXO := FXO == (OonP * FXOP + VoffX * FO + otherXoffO* FX)/DXO;
DOP := PonQ + XoffO;
EQOP := FOP == (PonQ * FOPQ + XoffO * FP)/DOP;
DPQ := QonV + OoffP;
EQPQ := FPQ == (QonV * FPQV + OoffP * FQ)/DPQ;
DQV := PoffQ + otherQoffV + Vgo;
EQQV := FQV == (PoffQ * FV + otherQoffV * FQ + Vgo)/DQV;
DVX := QoffV + XonO + Vgo; 
EQVX := FVX == (QoffV * FX + XonO * FVXO + Vgo)/DVX;

DXP := VoffX + XonO + OoffP + PonQ;
EQXP := FXP == (VoffX * FP + XonO * FXOP + OoffP * FX + PonQ * FPQX)/
    DXP; 
DOQ := XoffO + OonP + PoffQ + QonV;
EQOQ := FOQ == (XoffO * FQ + OonP * FOPQ + PoffQ * FO + QonV * FQVO)/
    DOQ;
DPV := OoffP + PonQ + QoffV + Vgo;
EQPV := FPV == (OoffP * FV + PonQ * FPQV + QoffV * FP + Vgo)/DPV;
DQX := PoffQ + QonV + VoffX + XonO;
EQQX := FQX == (PoffQ * FX + QonV * FQVX + VoffX * FQ + 
      XonO * FXOQ)/DQX;
DVO := QoffV + Vgo + XoffO + OonP;
EQVO := FVO == (QoffV * FO + Vgo + XoffO * FV + OonP * FOPV)/DVO;

DXOP := VoffX + otherXoffO + PonQ;
EQXOP := FXOP == (VoffX *FOP + otherXoffO * FXP + PonQ * FNV)/DXOP;
DOPQ := XoffO + QonV;
EQOPQ := FOPQ == (XoffO * FPQ + QonV * FNX)/DOPQ;
DPQV := OoffP + otherQoffV + Vgo;
EQPQV := FPQV == (OoffP *FQV + otherQoffV * FPQ + Vgo)/DPQV;
DQVX := PoffQ + otherQoffV + Vgo + XonO;
EQQVX := FQVX == (PoffQ *FVX + otherQoffV * FQX + Vgo + XonO * FNP)/
    DQVX;
DVXO := QoffV + Vgo + otherXoffO + OonP;
EQVXO := FVXO == (QoffV * FXO + Vgo + otherXoffO * FVX + OonP * FNQ)/
    DVXO;

DXOQ := VoffX + otherXoffO + OonP + PoffQ + QonV;
EQXOQ := FXOQ == ( 
     VoffX * FOQ + otherXoffO * FQX + OonP * FNV + PoffQ * FXO + 
      QonV * FNP)/DXOQ;
DOPV := XoffO + PonQ + QoffV + Vgo;
EQOPV := FOPV == (XoffO * FPV + PonQ * FNX + QoffV * FOP + Vgo)/DOPV;
DPQX := OoffP + QonV + VoffX + XonO;
EQPQX := FPQX == (OoffP * FQX + QonV * FNO + VoffX * FPQ + 
      XonO * FNV)/DPQX;
DQVO := PoffQ + otherQoffV + Vgo + XoffO + OonP;
EQQVO := FQVO == (PoffQ * FVO + otherQoffV * FOQ + Vgo + 
      XoffO * FQV + OonP * FNX)/DQVO;
DVXP := QoffV + Vgo + XonO + OoffP + PonQ;
EQVXP := FVXP == (QoffV * FXP + Vgo + XonO * FNQ + OoffP * FVX + 
      PonQ * FNO)/DVXP;

DNX := Vgo + XoffO + otherQoffV;
EQNX := FNX == (Vgo + XoffO * FPQV + otherQoffV * FOPQ)/DNX; 
DNO := Vgo + XonO + OoffP + otherQoffV;
EQNO := FNO == (Vgo + XonO * Fall + OoffP *FQVX + otherQoffV * FPQX)/
    DNO; 
DNP := otherQoffV + otherXoffO + OonP + PoffQ + Vgo;
EQNP := FNP == (otherQoffV * FXOQ + otherXoffO * FQVX + OonP * Fall + 
      PoffQ *FVXO + Vgo)/DNP;
DNQ := QoffV + otherXoffO + PonQ + Vgo;
EQNQ := FNQ == (QoffV * FXOP + otherXoffO * FVXP + PonQ * Fall + Vgo)/
    DNQ;
DNV := QonV + VoffX + otherXoffO;
EQNV := FNV == (QonV * Fall + VoffX *FOPQ + otherXoffO * FPQX)/DNV;

Dall := otherQoffV + otherXoffO + Vgo;
EQall := Fall == (otherQoffV * FNV + otherXoffO * FNO + Vgo)/Dall;

AllEQs := {EQX, EQO, EQP, EQQ, EQV, EQXO, EQOP, EQPQ, EQQV, EQVX, 
   EQXP, EQOQ, EQPV, EQQX, EQVO, EQXOP, EQOPQ, EQPQV, EQQVX,
   EQVXO, EQXOQ, EQOPV, EQPQX, EQQVO, EQVXP, EQNX, EQNO, EQNP,
   EQNQ, EQNV, EQall};
Allvars := {FX, FO, FP, FQ, FV, FXO, FOP, FPQ, FQV, FVX, FXP, FOQ, 
   FPV, FQX, FVO, FXOP, FOPQ, FPQV, FQVX, FVXO, FXOQ, FOPV, FPQX,
   FQVO, FVXP, FNX, FNO, FNP, FNQ, FNV, Fall};
 
SystemSolution := Solve[AllEQs, Allvars];
SolvedVars = Map[First, Part[SystemSolution, 1] ];
FXPos = Part[Part[Position[SolvedVars, FX] , 1], 1];
TheSolution := Part[Part[SystemSolution, 1], FXPos]

Soln := Collect[Collect[Simplify[Part[TheSolution, 2]], M], L];
(* The expression for FX as a function of L, M and r *)

MyNum := Numerator[Factor[Soln]];
MyDen := Denominator[Factor[Soln]];
Print["Numerator of FX"];
MonomialList[MyNum, {L, M}]
Print["Denominator of FX"];
MonomialList[MyDen, {L, M}]

\end{verbatim}
\end{sloppypar}

\end{document}